\newcommand{\ignore}[1]{}
\newcommand{\myparagraph}[1]{\vspace{7pt}\noindent\textbf{#1}}
\newcommand{\privK}{K_{priv}}
\newcommand{\pubK}{K_{pub}}
\newcommand{\Log}{log}
\newtheorem{theorem}{Theorem}[section]
\newtheorem{lemma}[theorem]{Lemma}
\newtheorem{claim}[theorem]{Claim}
\newtheorem{definition}{Definition}
\newcommand{\correct}{correct}
\begin{document}
\title{Classifying Trusted Hardware via Unidirectional Communication}

\date{\today}

\author{Naama Ben-David\inst{1} \and Kartik Nayak\inst{2}}

\institute{VMware Research - \texttt{bendavidn@vmware.com} \and Duke University - \texttt{kartik@cs.duke.edu}}

\maketitle
\begin{abstract}
    It is well known that Byzantine fault tolerant (BFT) consensus cannot be solved in the classic asynchronous message passing model when one-third or more of the processes may be faulty. Since many modern applications require higher fault tolerance, this bound has been circumvented by introducing \emph{non-equivocation mechanisms} that prevent Byzantine processes from sending conflicting messages to other processes. The use of  \emph{trusted hardware} is a way to implement non-equivocation.
    
    Several different trusted hardware modules have been considered in the literature. In this paper, we study whether all trusted hardware modules are equivalent in the power that they provide to a system. We show that while they do all prevent equivocation, we can partition trusted hardware modules into two different power classes; those that employ shared memory primitives, and those that do not. We separate these classes using a new notion we call \emph{unidirectionality}, which describes a useful guarantee on the ability of processes to prevent network partitions. We show that shared-memory based hardware primitives provide unidirectionality, while others do not.
\end{abstract}

\section{Introduction}

Byzantine fault tolerance (BFT) is a fundamental problem in distributed computing, which has benefited from growing interest in recent years due to its application in blockchain technologies. BFT consensus allows a group of $n$ processes to commit on the same value even if up to $t$ of these processes behave arbitrarily. Depending on the problem formulation, this value may be proposed by a designated sender or all processes. It is well known that in practical distributed networks where processes communicate through asynchronous (or partially synchronous) message passing, $n \geq 3t+1$ processes are needed to tolerate $t$ Byzantine faults~\cite{dwork1988consensus}. Unfortunately, in many applications of BFT, tolerating only the failures of up to one-third of the network may not be enough. 

To circumvent this bound, at a high level we need stronger guarantees from the hardware that facilitates communication. For example, if we can guarantee \emph{synchronous} communication, 
i.e., any message sent by an honest process reaches its destination within $\Delta$ time where $\Delta$ is a known bounded message delay, then using a public key infrastructure, we can achieve higher fault tolerance~\cite{dolev1983authenticated}.
However, assuming synchrony often means that $\Delta$ needs to be very large, making such algorithms slow in practice.
Another approach for increasing fault tolerance augments the processes in the asynchronous (or partially synchronous) distributed network with some \emph{trusted hardware} which restricts the Byzantine actions a malicious process can perform~\cite{aguilera2019impact,bessani2008sharing,chun2007attested,levin2009trinc,correia2010asynchronous}.   Example trusted hardware modules include Trusted Platform Module (TPM), Intel Software Guard extensions (SGX)~\cite{costan2016intel}, Single-Writer Multi-Reader (SWMR) registers~\cite{aguilera2019impact,bouzid2016necessary,malkhi2003objects}, sticky bits~\cite{malkhi2003objects} and PEATS~\cite{bessani2008sharing}. Intuitively, all of these trusted hardware primitives provide a way to prevent \emph{equivocation}, disallowing a Byzantine process from sending conflicting messages to two different processes.

It has been shown that trusted hardware is a weaker assumption than synchrony; synchronous systems can solve a strictly larger set of tasks than systems augmented with trusted hardware while tolerating the failures of a minority of the processes~\cite{malkhi2003objects,bhadauria2018brief}.
Intuitively, the synchrony assumption is stronger since it allows us to detect equivocation by using a guaranteed delivery assumption. On the other hand, trusted hardware provides us with non-equivocation, but there is no assumption made on guaranteed delivery. 
However, while the difference between trusted hardware and synchrony is well understood, there has been little work on understanding the differences between trusted hardware options.

In this work, we are interested in the following question: \emph{Are all trusted hardware primitives equivalent, or are some provably stronger than others?}
%
While the different primitives are useful to improve fault tolerance, ensuring that they hold in practical systems requires significantly different approaches. Thus, answering this question can lead to a deeper understanding of the protocols involved and the properties they provide, and help determine which primitives hardware designers should invest in. 

\begin{figure}
    \centering
    \includegraphics[width=0.7\textwidth]{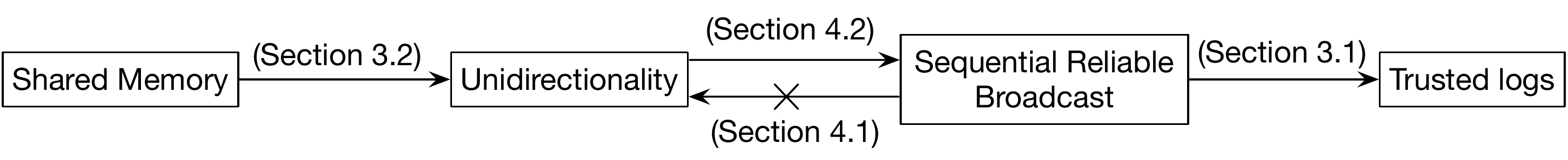}
    \caption{\textbf{Summary of results.} A $\rightarrow$ B indicates A can implement B}
    \label{fig:my_label}
\end{figure}
\ignore{
\myparagraph{Separating synchrony and trusted hardware.} Under Byzantine consensus, processes may equivocate and send different values to different processes. We observe that the synchrony assumption allows us to detect these equivocating values since it provides us with  \emph{bidirectionality}, i.e., for any pair of honest correct processes that send a message to each other in a round, both processes receive the other process's messages at the end of the round. This eventually provides us with the desirable property of non-equivocation, enabling tolerance of a minority corruption. On the other hand, different trusted hardware modules directly disallow parties from equivocating without providing bidirectionality.~\kartik{cite clement?} 
We formalize the separation by \kartik{complete} (Section~\kartik{refer}).
}

We answer the above question by showing a separation between two distinct classes of trusted hardware. Intuitively, these classes correspond to \emph{shared memory primitives} and those that do not provide shared memory. We prove this separation by defining a new property, called \emph{unidirectionality}, which can be achieved by all  shared memory hardware primitives (e.g., SWMR registers, sticky bits, PEATS), but not by message passing primitives (e.g., trusted counters, trusted logs). Finally, to show the separation is strict, we provide an implementation of message passing trusted primitives from a system that provides unidirectionality.



\ignore{
any two honest processes require stronger network assumption  To circumvent this reality, a significant line of research has considered augmenting a distributed network with \emph{trusted hardware} to raise its tolerance to Byzantine faults~\cite{aguilera2019impact,bessani2008sharing,chun2007attested,levin2009trinc,correia2010asynchronous}. 

The use of trusted hardware in Byzantine fault tolerant networks often boils down to preventing \emph{equivocation}, i.e., Byzantine processes are prevented from sending conflicting messages to different processes. The power of \emph{non-equivocation} has been extensively studied~\cite{clement2012limited,jaffe2012price,madsen2020subject}. \Naama{Say a bit more here.}

In this paper, rather than study equivocation itself, we study the relative power of the mechanisms that are known to provide it. We are interested in the following question: \emph{do all suggested trusted hardware mechanisms have the same power?} We believe that answering this question can lead to a deeper understanding of the algorithms built on top of these hardware. Are some trusted hardware mechanisms better than others, not only due to ease of implementation, but because they have provably better capabilities? Which trusted hardware solution should we invest time and money into in order to make it more prevalent?

The main contribution of our paper is to show that there is indeed a power gap between trusted hardware mechanisms. We define two primitives that can provide non-equivocation, and show that one is strictly stronger than the other. In particular, \Naama{describe intuition for unidirectionality and sequenced reliable broadcast}. We then categorize existing trusted hardware mechanisms by those that are at least as strong as unidirectionality, and those that are no stronger than reliable broadcast.
}


\ignore{
We note that this paper does not paint the full picture of the relative power of different trusted hardware. In particular, we do not show that mechanisms within one of our defined categories are equivalent. The goal of this paper is merely to draw attention to the fact that, while the literature often treats all trusted hardware solutions as providing the same non-equivocating power, the story is actually more complex.\kartik{do not entirely understand this paragraph}

\kartik{a slightly different way to bring up the story: asynchrony requires 1/3, to go from 1/3 to 1/2, literature has taken two routes: synchrony and trusted hardware. Within trusted hardware, there are two different approaches A2M, SGX vs RDMA. Are they all equivalent? No. Our goal is to study how they are related and if one of these assumptions provide us with stronger guarantees than others. 

Paragraph on differences between sync and trusted hardware. ability to detect equivocation vs ensure non-equivocation. We capture this as bidirectionality vs non-equivocation. state why one is stronger than the other.

Paragraph on differences between different types of trusted hardware. do all suggested trusted hardware mechanisms have the same power?We believe that answering this question can lead to a deeper understanding of the algorithms built on top of these hardware. Are some trusted hardware mechanisms better than others, not only due to ease of implementation, but because they have provably better capabilities. We show that indeed there is a gap: in the literature, we know A2M, TrInc can provide us with non-equivocation. we observe that SWMR can provide us with a better property -- unidirectionality. Intuitively explain what unidirectionality provides us with, in addition to non-equivocation. Also, intuitively explain why the underlying mechanism of A2M cannot provide unidirectionality: is it because with SWMR one replica can access the state of another replica directly?}

-------------
  
Brief announcement story:

Many different non-equivocation mechanisms in the literature; A2M, Trinc, RDMA, sticky bits, PEATS, others?

We show a separation into two classes: (weaker than) reliable broadcast and (stronger than) unidirectional. Show that A2M and Trinc are at most as strong as reliable broadcast. Show that RDMA and other shared memory methods are at least as strong as unidirectionality (which we define). Show separation between reliable broadcast and unidirectionality, therefore implying a separation between these non-eq mechanisms. 

We also show (just by relying on previously known results) that the shared-memory methods (even the strongest ones) are weaker than synchrony (bidirectionality), and that A2M and TrInc are stronger than asynchrony (zero-directionality). Therefore the categories we define are meaningful (not subsumed by synchrony or asynchrony).

We believe that unidirectionality is a nice distillment of the power of asynchronous shared memory vs asynchronous message passing. Might be of independent interest (especially given $\Delta$ synchronous models).
}

\section{Preliminaries}\label{sec:prelims}
We consider an asynchronous distributed system with $n$ processes, up to $f$ of which may be Byzantine. A Byzantine process may behave arbitrarily. If a process is not Byzantine, we say that it is \emph{\correct}. We assume processes have access to \emph{unforgeable transferable signatures}. In addition to signatures, we assume that processes can communicate via one of the communication methods described in the next subsection, all of which are known to implement some form of \emph{non-equivocation}. 

There have been several studies of non-equivocation mechanisms~\cite{clement2012limited,jaffe2012price,madsen2020subject}, and hardware that provides it is becoming increasingly practical~\cite{levin2009trinc,aguilera2019impact,aguilera2020microsecond}. It is known that a non-equivocation mechanism, which prevents Byzantine processes from sending conflicting messages to different processes, can increase the fault tolerance of the system. More specifically, a system with non-equivocation and transferable signatures can tolerate the corruptions of any minority of the processes when solving \emph{weak Byzantine agreement}. However, \emph{strong Byzantine agreement}, which differs from its weak counterpart by considering only the inputs of \correct{} processes to be valid, cannot be solved in such systems with the same fault tolerance. Indeed, any asynchronous or partially synchronous system can only solve strong agreement when there are at least $n > 3f$ processes~\cite{malkhi2003objects}.  This is true even in the crash failure model. On the other hand, synchronous systems with transferable signatures \emph{can} solve strong agreement with $n >2f$~\cite{dolev1983authenticated}. 


\subsection{Non-Equivocation Mechanisms in the Literature}\label{sec:noneqLit}

\myparagraph{Mechanisms with trusted logs.} Attested append-only memory (A2M)~\cite{chun2007attested} provides a trusted log on which any process can \emph{append} a value, and receive an \emph{attestation} with the index of this value in this log. Past log entries cannot be modified.  Levin et al.~\cite{levin2009trinc} simplify the assumptions required by A2M by introducing a \emph{trusted incrementer (TrInc)}. To prevent equivocation, a sending process must increment the TrInc counter and attach the resulting sequence number to its message. The TrInc guarantees that no two messages can have the same sequence number attached to them. From the perspective of providing non-equivocation guarantees, Intel SGX and ARM TrustZone are similar to A2M and TrInc. Though, in addition, they allow for more expressive computations. 




\myparagraph{Shared memory with ACLs.} 
In shared memory, to tolerate any Byzantine failures at all, we must assume that the Byzantine processes cannot write to all memory locations; otherwise, they can completely overwrite the memory, thereby preventing communication among \correct{} processes. To this end, shared memory primitives have been associated with \emph{access control lists (ACLs)}. These lists specify, for each object $O$ and operation $op$, which processes can execute $op$ on $O$. A special case of this is \emph{single-writer multi-reader (SWMR) registers}, which allows every process to invoke a \emph{read} operation on every register, and each register has an \emph{owner} process, which is the only process allowed to \emph{write} on this register. Studies of SWMR registers in the context of Byzantine fault tolerance appeared in~\cite{aguilera2019impact,bouzid2016necessary,malkhi2003objects}. Other shared memory primitives that have been studied in this context include \emph{sticky bits}~\cite{malkhi2003objects}, which are registers whose values cannot be changed after the first write, and \emph{Policy-enforced augmented tuple spaces (PEATS)}, which allow for inserting, removing, and reading typed entries from data structures called tuples. PEATS control access to object operations not just through static ACLs, but through \emph{policies} that can take into account the state of the object at the time of the attempted operation~\cite{bessani2008sharing}. It has been shown that SWMR registers can solve weak Byzantine agreement with $n\geq 2f+1$~\cite{aguilera2019impact}, and that none of these shared memory primitives can solve strong Byzantine agreement with $n\leq 3f$~\cite{malkhi2003objects}. Thus, it is clear that such shared memory primitives are stronger than asynchronous message passing and weaker than synchronous message passing model.




\ignore{
\begin{algorithm}[htbp]
\caption{Trusted Hardware Functionality.}
\label{alg:hw-interface}
\begin{algorithmic}[1]
\State $(\privK, \pubK)$: public-private key pair associated with the h/w device
\State $C$: monotonic counter representing the number of logs maintained by the hardware
\State $\Log$: list of logs indexed by $id$; each log is an array indexed by sequence number
\State $c_{id}$: monotonic counter representing the length of log indexed by $id$
\State
\State{\textbf{function} \textsc{CreateLog}()}
\State{\hskip1.5em Increment $C$, initialize empty log with $id$ = $C$, $c_{id} = 0$; \textbf{return} $id$}
\\

\State{\textbf{function} \textsc{Append}($id, x$)}
\State{\hskip1.5em \textbf{if} $id \leq C$: Increment $c_{id}$, set $log[id][c_{id}]=x$}
\\
\State{\textbf{function} \textsc{Lookup}($id, s, z$)}
\State{\hskip1.5em \textbf{if} $id \leq C$ and $s \leq c_{id}$: \textbf{return} $\langle{\textsc{lookup}, id, s, log[id][s], z}\rangle_{K_{priv}}$}
\\
\State{\textbf{function} \textsc{End}($id, z$)}
\State{\hskip1.5em \textbf{if} $id \leq C$: \textbf{return} $\langle{\textsc{end}, id, c_{id}, log[id][c_{id}], z}\rangle_{K_{priv}}$}
\ignore{
\\
\State{\textbf{function} \textsc{VerifyAttestation}($\langle{*}\rangle, id, K_{pub}$)}
\State{\hskip1.5em \textbf{if} $\langle{*}\rangle$ is a valid attestation from log $id$ signed using $K_{priv}$ associated to $K_{pub}$:}
\State{\hskip3.0em \textbf{return} true}
}
\end{algorithmic}
\end{algorithm}
}

\section{Categorizing Non-Equivocation Mechanisms}

In this section, we define two useful notions: sequenced reliable broadcast (SRB) and unidirectionality. We show that trusted log primitives are weaker than (SRB), and that shared memory primitives are stronger than unidirectionality. In the next section, we complete the separation by showing that unidirectionality is strictly stronger than SRB.

\subsection{Trusted Logs are Weaker than Sequenced Reliable Broadcast}
The first primitive we consider is called \emph{sequenced reliable broadcast}. Similar definitions have appeared in the literature~\cite{aguilera2019impact,guerraoui2006introduction}. Intuitively, this primitive is similar to reliable broadcast, but enforces sequence numbers on the messages, which must be broadcast and delivered in this order. 

\begin{definition}[Sequenced Reliable Broadcast]
	In \emph{sequenced reliable broadcast}, there is a designated process $p$ called the \emph{sender} that can broadcast any number of messages, each with a unique sequence number, such that the following conditions hold:
	\begin{enumerate}
		\item \label{prop:SRBweak-termination} If $p$ is \correct{}, then every \correct{} process eventually delivers every message that $p$ broadcasts.
		\item \label{prop:SRBstrong-termination} If some \correct{} process $q$ delivers message $m$ with sequence number $k$ from $p$, then eventually every \correct{} process delivers $m$ with sequence number $k$ from $p$.
		\item \label{prop:SRBseq} If some \correct{} process $q$ delivers a message with sequence number $k$ from $p$ at time $t$, then $q$ delivered messages with all sequence numbers $1\leq k' < k$ from $p$ before time $t$.
		\item \label{prop:SRBintegrity} If some \correct{} process delivers a message $m$ from $p$, then $p$ broadcast $m$ at some earlier point in time.
	\end{enumerate}
\end{definition}

To formally compare the power of SRB to trusted hardware primitives, we begin by defining the functionality that these primitives provide. In their paper, Levin et al.~\cite{levin2009trinc} show that TrInc can implement the interface of attested append-only memory (A2M)~\cite{chun2007attested}. Therefore, to show that both primitives are weaker than SRB, we simply have to show that SRB can implement TrInc. Algorithm~\ref{alg:hw-interface} presents a formal interface for TrInc. 
Intuitively, TrInc provides each process with access to its own \emph{Trinket}, which it can use to get \emph{attestations} of messages it would like to send. A process must provide its message and a sequence number to the \emph{Attest} function in order to get an attestation. The process can then send an attestation of its message, which contains the message itself. Processes receiving an attestation can verify that it was produced by a valid Trinket, but using the CheckAttestation function with the id of the process that sent the message. A Trinket does not produce a new valid attestation for a  sequence number that has already been used.
This interface is a simplification of the way TrInc was presented in~\cite{levin2009trinc}, keeping the parts of its that affect its theoretical power, and omitting those parts that were put in place for improving its practicality in real systems.


\begin{figure}
\begin{lstlisting}[postbreak=\mbox{\textcolor{red}{$\hookrightarrow$}\space}, breaklines=true, keywords={}]
Process p can invoke the following functions on its Trinket @$T_p$@

    attestation Attest(seq-num c, message m)
        Returns a valid attestation @$a$@ attesting to (prev, c, m), if c is higher than any seq-num used for an attestation on this Trinket so far. prev is the sequence number of the last attested value.
        Returns null otherwise
        
    bool CheckAttestation(attestation @$a$@, id q)
        Return true if @$a$@ is a valid attestation that was previously output by Trinket @$T_q$@. Return false otherwise
\end{lstlisting}
    \caption{TrInc Interface}
    \label{alg:hw-interface}
\end{figure}

\begin{theorem}
Sequenced Reliable Broadcast can implement the interface specified in Algorithm~\ref{alg:hw-interface}.
\end{theorem}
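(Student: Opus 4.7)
The plan is to simulate TrInc by giving each process $p$ its own SRB instance with $p$ as the sender, together with a digital signature key. The simulation of $\mathrm{Attest}(c, m)$ at $p$ keeps a local variable $\mathit{prev}$ storing the largest TrInc counter value used so far on $p$'s Trinket; if $c > \mathit{prev}$, the process produces $a = \mathrm{sign}_p(\mathit{prev}, c, m)$, SRB-broadcasts the payload $(\mathit{prev}, c, m, a)$, updates $\mathit{prev} \leftarrow c$, and returns $a$. For $\mathrm{CheckAttestation}(a, q)$ at a process $r$, $r$ accepts iff (i) $a$ is a valid signature by $q$ on some $(\mathit{prev}', c', m')$, and (ii) $r$ has SRB-delivered $(\mathit{prev}', c', m', a)$ from $q$ and the embedded $c'$ strictly exceeds any previously-accepted counter value SRB-delivered from $q$ (i.e., $r$ filters the SRB stream from $q$ to keep only messages with monotonically increasing $c$).

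After describing the simulation, I would verify the four Trinket guarantees one at a time. For the Attest side, the monotonic-counter invariant that $c$ must exceed the previous value is immediate from the local check. For the CheckAttestation side, I would argue three things. First, consistency across correct processes: if any correct process accepts $a$, then that process SRB-delivered $a$ from $q$, so by SRB property~\ref{prop:SRBstrong-termination} every correct process eventually SRB-delivers the same message at the same position in $q$'s stream; since the monotonic-$c$ filter is deterministic, every correct process eventually accepts $a$. Second, liveness for correct senders: if a correct $p$ calls $\mathrm{Attest}(c, m)$ and receives attestation $a$, SRB property~\ref{prop:SRBweak-termination} guarantees delivery at every correct process, and $p$'s own code ensures the $c$ values it broadcasts are strictly increasing, so the filter always admits $a$. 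Third, integrity: acceptance requires an SRB delivery, and SRB property~\ref{prop:SRBintegrity} guarantees that $q$ actually SRB-broadcast $(\mathit{prev}', c', m', a)$ at some earlier point.

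The main obstacle is non-equivocation in the face of a Byzantine $q$, which could sign arbitrarily many tuples and try to SRB-broadcast several of them with the same counter $c$. I would show that at most one tuple per $c$ can ever be accepted by any correct process: by SRB property~\ref{prop:SRBstrong-termination}, the sequence of SRB-deliveries from $q$ is identical at every correct process, and by SRB property~\ref{prop:SRBseq} it is delivered in a fixed order; the monotonic-$c$ filter then keeps the first delivered tuple carrying a given counter and rejects any subsequent tuple with an equal or smaller $c$. Combined with signature unforgeability, this gives the desired Trinket-style non-equivocation.

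A secondary subtlety I need to address is the mismatch between SRB's requirement (property~\ref{prop:SRBseq}) that sequence numbers be delivered consecutively starting from~$1$ and TrInc's ability to skip counter values. I resolve this by keeping the two numbering schemes separate: the SRB sequence number is an internal counter, incremented by one on each successful $\mathrm{Attest}$, while the TrInc-level $(\mathit{prev}, c)$ pair is carried inside the SRB payload. With this separation, SRB's consecutive-delivery requirement is trivially satisfied by the simulation, and the TrInc-level monotonicity is enforced purely by the receiver-side filter on the embedded $c$.
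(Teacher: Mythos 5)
Your construction is essentially the paper's own: implement Attest by SRB-broadcasting the pair (counter, message) on a per-sender instance (with the SRB sequence number kept internal), and implement CheckAttestation by accepting only SRB-delivered payloads that pass a receiver-side monotonically-increasing-counter filter, with liveness from property~\ref{prop:SRBweak-termination} and validity from property~\ref{prop:SRBintegrity}. The extra digital signature inside the attestation and the explicit agreement argument via properties~\ref{prop:SRBstrong-termination} and~\ref{prop:SRBseq} are harmless (the signature is in fact redundant given SRB integrity), so the proof is correct and matches the paper's approach.
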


\begin{proof}
   We present an implementation of the TrInc functionality using sequenced reliable broadcast. 
 
    \begin{lstlisting}
for each process p, 
initialize: k = 0, for each process q, C[q] = 0

attestation Attest(seq-num c, message m) {
    Broadcast(k, (c,m)); // k is the broadcast sequence number
    return (k, (c,m));}

bool CheckAttestation(attestation a, id q){
    upon delivering a message (k, c, m) from q
        if C[q] < c { @\label{line:checkseq}@
            store (k, (c,m));
            C[q] = c;   }
    if (I've stored a message (k,(c,m)) == a from q) {
        return true; }
    else{
        return false;}  }
    \end{lstlisting}
    
We now show that the above implementation satisfies the properties of the TrInc interface. In particular, we show that (1) $CheckAttestation(a,q)$ would eventually return true if process $q$ correctly invoked a $T_q$.Attest instance that returned $a$, and (2) $CheckAttestation(a,q)$ returns false if $a$ was not correctly attested by $T_q$.


For the first property, recall that by property~\ref{prop:SRBweak-termination}, if $q$ correctly invoked broadcast for value $(k,c,m)$, then eventually every correct process will deliver $(k,c,m)$ from $q$. Furthermore, $T_q$.Attest(c,m) will return $(k,c,m)$ in this case.  Therefore, eventually, some invocation of $CheckAttestation(a,q)$ by each correct process $p$ will happen after the process already delivered $(k,c,m)$. Recall that a correct attestation always uses a sequence number $c$ that is larger than all previous ones. Therefore, if $q$ correctly attested its message delivered by $p$, the check on Line~\ref{line:checkseq} passes, and $p$ stores this message. So the $CheckAttestation(a,q)$ call by $p$ will return true.

Secondly, by property~\ref{prop:SRBintegrity} of SRB, if a correct process $p$ delivered an attestation $a$ from $q$, $q$ must have broadcast it. Therefore, by definition, there was an Attest by $q$ that returned $a$. A $CheckAttestation(a,q)$ by $p$ returns false if $p$ did not deliver $a$ from $q$. 
\end{proof}

 \ignore{  
   \begin{lstlisting}
function CreateLog(){
    Initialize new SRB instance with sender id=C
    Return C    }

function Append(id, x){
    Have process @$id$@ broadcast $x$   }

function Lookup(id, s, z){
    Check whether message with sequence number @$s$@ has been delivered from sender @$id$@.
    If so, return the value delivered.  }

function End(id, z){
    Return the most recent value delivered from sender @$id$@.  }
   \end{lstlisting}
   
   Note that the Append function is the only place in which broadcast is called, and that each broadcast instance is associated with exactly one log. Furthermore, by property~\ref{prop:SRBintegrity}, no value can be returned by Lookup or End that wasn't broadcast by the sender of the corresponding instance of SRB. Therefore, Lookup and End cannot return a value that was not Appended to the appropriate log.
   
   \Naama{This abstraction actually sounds like shared memory. If it is, then it is actually stronger than SRB. The issue is with linearizing when the $c_{id}$ is incremented. In the implementation given, I don't think there's a way to linearize. Is the intention of the hw functionality that it be linearizable? Also, some weirdness in that any process can call lookup and end in this implementation, but createLog and append are called by some different entity.}
}

\subsection{Shared Memory is Stronger than Unidirectionality}
Next, we define a new notion, called \emph{unidirectional communication}. Intuitively, a system with unidirectional communication is partially immune to network partitions, as it can implement rounds in which there is at least some communication between every pair of \correct{} processes. 

\begin{definition}[Unidirectional communication]
	A system provides \emph{unidirectional communication} if it can implement \emph{rounds} with the following property:
	
	For any pair of correct processes $p$ and $q$, if both $p$ and $q$ send a message to each other in round $r$, then either $p$ receives $q$'s message before the beginning of $p$'s next round or $q$ receives $p$'s message before the beginning of $q$'s next round.
\end{definition}

We now show that trusted hardware that is based on shared memory can implement unidirectional rounds. That is, this hardware is at least as strong as unidirectionality.

Consider a shared memory setting with $n$ objects $o_1 \ldots o_n$, such that object $o_i$ can be modified by process $p_i$ and read by all processes. We note that all shared memory objects that have some modifying operation and some read operation, along with access control lists (ACL)~\cite{malkhi2003objects} can provide this setting. This includes SWMR registers, PEATS, and all objects considered in~\cite{malkhi2003objects}. We show this setting can achieve unidirectional communication.

\begin{claim}
    Consider a shared memory system $S$ in which for each process $p_i$, there is some object $o_i$ such that $p_i$ is the only process that can modify $o_i$, and all processes can read $o_i$. Unidirectional communication can be implemented in $S$.
\end{claim}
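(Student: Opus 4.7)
\medskip
\noindent\textbf{Proof plan.} My plan is to exhibit a round protocol that uses only the assumed primitives, and then do a short case analysis on the relative order of the two processes' writes in a round.

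\medskip
\noindent\textbf{The protocol.} In round $r$, each correct process $p_i$ first writes (or appends) the tagged value $(r, m_i^r)$ to its own object $o_i$, then reads every $o_j$ for $j \neq i$, and only after all these reads complete does $p_i$ advance to round $r+1$. The protocol uses only the two operations guaranteed by the hypothesis: a modifying operation on $o_i$ by $p_i$, and a read of $o_j$ by any process. A \correct{} process $p_i$ considers itself to have \emph{received} $p_j$'s round-$r$ message whenever its round-$r$ read of $o_j$ returns a value tagged with round $r$ (or it can recover one from the object's state).

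\medskip
\noindent\textbf{The core argument.} Fix any pair of correct processes $p$ and $q$ that both send in round $r$. Let $W_p, R_p$ denote the linearization points of $p$'s round-$r$ write to $o_p$ and of $p$'s round-$r$ read of $o_q$, and similarly $W_q, R_q$. By the protocol, $W_p < R_p$ and $W_q < R_q$, and each process completes its round-$r$ reads before advancing. Without loss of generality, assume $W_p \leq W_q$ (the symmetric case is analogous). Then $W_p \leq W_q < R_q$, so $q$'s round-$r$ read of $o_p$ occurs strictly after $p$'s round-$r$ write to $o_p$. Since $p$ is the sole modifier of $o_p$, the state observed by $q$'s read reflects $p$'s round-$r$ write, so $q$ receives $p$'s round-$r$ message before beginning round $r+1$. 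This is exactly the unidirectionality guarantee.

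\medskip
\noindent\textbf{Main obstacle.} The one delicate point is that, if $o_p$ is a single-value register, a later-round write by $p$ could in principle overwrite the round-$r$ value before $q$'s read, so $q$ could read a round $r' > r$ tag and ``miss'' the round-$r$ payload. I expect this to be the main thing to address carefully. The resolution is that the shared-memory primitives assumed by the claim (SWMR registers, PEATS, sticky-bit style objects with ACLs) suffice to implement an append-only per-process log: $p_i$ either writes the full history each round, or uses a growing array of locations it alone can modify. Under this append-only semantics, the argument above goes through verbatim, since $q$'s round-$r$ read of $o_p$ occurs after $W_p$ and therefore recovers $p$'s round-$r$ message from the log. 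Byzantine processes other than $p$ and $q$ do not interfere because each $o_i$ is modifiable only by its owner, so the reads on $o_p$ and $o_q$ return values written by the two correct processes in question.
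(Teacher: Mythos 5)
Your proof is correct and follows essentially the same approach as the paper: write your own tagged $(r,m)$ value before reading all other objects, then argue from the ordering of the two correct processes' writes (the later writer's read of the earlier writer's object must see the round-$r$ message). Your extra attention to the append-only/overwrite subtlety is a reasonable refinement of a point the paper glosses over by simply assuming an append operation, but it does not change the argument.
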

\begin{proof}
	We present an implementation of a unidirectional round using $n$ objects $o_1 \ldots o_n$, such that $o_i$ allows only process $p_i$ to modify it, and all processes to read it. This protocol was first introduced in~\cite{aguilera2019impact} to implement a weak notion of broadcast using SWMR registers.
	
	\begin{lstlisting}
	In round r, process @$p_i$@ executes the following:
    	To send message m, @$p_i$@ appends (r, m) in object @$o_i$@
    	@$p_i$@ reads objects @$o_1 \ldots o_n$@
	    @$p_i$@ is said to receive a round r message m' from process @$p_j$@ if @$p_i$@ reads M from @$o_j$@ such that
	        there is some message (r, m') in M
	\end{lstlisting}
	
	Assume by contradiction that the above implementation of rounds does not satisfy unidirectionality. That is, there exist two correct processes $p_i$ and $p_j$ that both send a message in round $r$, but neither receive the other's message. Assume without loss of generality that $p_i$ wrote its round $r$ message in $o_i$ before $p_j$ did so in $o_j$. Then, since $p_j$ must write its message before reading $o_i$, $p_j$ must see $p_i$'s round $r$ message when it reads $o_i$.
	Contradiction.
\end{proof}

\ignore{
\begin{definition}
A \emph{round} $r$ for a  process $p$ is if $p$ sends a message to some set $S$ of processes, then blocks until some event happens.
\end{definition}

\begin{definition}
We say a communication round $r$ has \emph{unidirectional} communication for process $p$ if, if $p$ send a round $r$ message to a \correct{} process $p'$, then by the end of $p$'s round $r$, either $p$ has received a round $r$ message from $p'$, or $p$ learns that $p'$ will hear $p$'s message by the end of $p'$'s round $r$. 
\end{definition}

\begin{definition}[Unidirectional communication between a pair of processes]
We say that a pair of correct processes $p$ and $q$ have unidirectional communication if when $p$ and $q$ send a message to each other, then either $p$ receives the message sent by $q$, and $q$, if it does not receive $p$'s message, receives a notification that $p$ has received its message, or vice versa. 
\end{definition}
}

\section{Separation Between Unidirectionality and Sequenced Reliable Broadcast}

We now show that sequenced reliable broadcast is strictly weaker than unidirectionality except in some corner cases. Intuitively, the result holds because of the inability of reliable broadcast to break through a network partition between two \correct{} processes. This result therefore holds even for stronger variants of non-equivocation, as long as the only guarantee they provide is \emph{eventual} delivery.

\subsection{SRB Cannot Implement Unidirectionality}
\begin{claim}
	Sequenced reliable broadcast cannot implement unidirectionality in a system with $n > 2f$ and $f>1$, under asynchrony and in the absence of additional assumptions.
\end{claim}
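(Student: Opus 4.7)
The plan is to prove the claim by contradiction, using an indistinguishability argument. The core idea is that under asynchrony SRB only guarantees eventual (not bounded) delivery, so an adversary can delay any specific SRB delivery arbitrarily long, while the protocol must remain live even when some processes are silent Byzantines.

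Suppose for contradiction that there is a protocol $\pi$ that implements unidirectional rounds on top of SRB in an asynchronous system with $n>2f$, $f>1$. First I would construct two one-sided ``silent-neighbor'' executions. In $E_p$, a correct process $p$ broadcasts a round-$r$ message while process $q$ and a support set $B^{*}\subseteq N\setminus\{p,q\}$ of up to $f-1$ other processes are Byzantine; $q$ issues no round-$r$ SRB broadcast, and the processes in $B^{*}$ behave as if $q$ did not exist in round $r$. By liveness of $\pi$ (round $r$ must terminate in every execution with at most $f$ Byzantines), $p$ must eventually terminate round $r$ at some time $t_p$, having delivered no round-$r$ message from $q$. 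I would construct a symmetric $E_q$ with the roles of $p$ and $q$ swapped (and support set $B^{**}$), in which $q$ terminates round $r$ at time $t_q$ with no round-$r$ delivery from $p$.

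Next I would build a combined execution $E_c$ in which both $p$ and $q$ are correct and both broadcast round-$r$ messages. The Byzantine set in $E_c$ is chosen from $B^{*}\cup B^{**}$; processes in $B^{*}$ simulate their $E_p$-behavior toward $p$ (acknowledging only $p$), while processes in $B^{**}$ simulate their $E_q$-behavior toward $q$. The adversary schedules SRB deliveries so that $p$'s view in $E_c$ up to $t_p$ is identical to its view in $E_p$ (in particular, $q$'s round-$r$ broadcast and any SRB message whose content would reveal $q$'s message are delayed past $t_p$), and symmetrically for $q$ up to $t_q$. By this indistinguishability, $p$ terminates round $r$ at $t_p$ in $E_c$ without delivering $q$'s message, and $q$ terminates at $t_q$ without delivering $p$'s message, which directly violates unidirectionality and yields the contradiction.

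The main obstacle will be constructing $E_c$: making the two one-sided views coexist in a single execution. SRB's per-sender FIFO property (Property 3) prevents a correct process from ``selectively'' revealing acks of $p$'s and $q$'s broadcasts to different recipients in different orders, so any correct process in $E_c$ that has delivered both broadcasts is a liability. This is precisely where the hypothesis $f>1$ becomes essential: the adversary needs two distinct Byzantine subsets in $E_c$ to keep $p$'s and $q$'s rounds terminating without cross-contamination, because a single Byzantine, constrained by SRB's non-equivocation, cannot broadcast covering messages that appear differently to $p$ and $q$. The bound $n>2f$ likewise guarantees that support sets $B^{*}$ and $B^{**}$ of size up to $f-1$ fit inside $N\setminus\{p,q\}$, so that the construction remains feasible and the two ``sides'' of the argument can be combined.
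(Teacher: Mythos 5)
Your plan is the right genre (liveness forces termination, asynchrony lets the adversary delay SRB deliveries past the termination time, then merge into a fault-free-looking execution), but the specific construction has two genuine gaps. First, the fault budget of $E_c$ does not add up: you argue that the adversary \emph{needs} two distinct Byzantine subsets $B^{*}$ and $B^{**}$, each of size up to $f-1$, but then $E_c$ contains up to $2f-2 > f$ Byzantine processes whenever $f\geq 3$, and in an execution with more than $f$ faults neither liveness nor unidirectionality is owed by the protocol, so no contradiction follows. (The constraint $n>2f$ only gives you room in $N\setminus\{p,q\}$; it does not enlarge the corruption budget.) Second, and more fundamentally, the step ``the adversary schedules SRB deliveries so that $p$'s view in $E_c$ up to $t_p$ is identical to its view in $E_p$'' is asserted rather than proved, and it fails in general. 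In $E_p$ the protocol can only be forced to let $p$ terminate when at most $f$ processes appear silent, so $p$'s view there includes messages from \emph{every} process in the residual correct set $R=N\setminus(\{p,q\}\cup B^{*}\cup B^{**})$, and symmetrically for $q$ in $E_q$. In $E_c$ each $w\in R$ is correct, eventually delivers both round-$r$ broadcasts in some single order, and emits one FIFO-sequenced SRB stream whose contents (and sequence numbers) reflect that order; by Properties 2 and 3 the adversary can only delay this stream, not fork it, so $w$ cannot simultaneously present to $p$ the messages it produced in $E_p$ (having seen $p$'s value but not $q$'s) and to $q$ the messages it produced in $E_q$ (having seen $q$'s value but not $p$'s). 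Your Byzantine supports do not remove this obstacle, since $p$ still needs $R$'s messages to reproduce its $E_p$ view; this is exactly the ``liability'' you flag and then leave unresolved.

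The paper's proof sidesteps both problems by never putting active Byzantine behavior into the merged execution. It partitions the processes into $Q$ with $|Q|=n-f$, $C_1=\{p\}$, and $C_2$ of size $f-1$ containing the other endpoint, and uses three scenarios: in the first $C_1$ crashes and $C_2$'s messages to $Q$ are delayed, in the second $C_2$ is silent and $C_1$'s messages to $Q$ are delayed, and in the third nobody is faulty but all messages leaving $C_1$ and $C_2$ toward other sets are delayed. Because $|C_1\cup C_2|=f$, the set $Q$ must terminate having heard only from $Q$, so $Q$'s behavior is literally identical in all three scenarios, and $C_1$ (resp.\ $C_2$) cannot distinguish the third scenario from the second (resp.\ first); hence in the fault-free third scenario two correct senders finish the round without hearing each other. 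If you want to salvage your write-up, the fix is essentially to adopt this shape: make the two isolated sides have total size exactly $f$, delay whole cuts rather than individual deliveries to $p$ and $q$, and let the merged execution be fault-free so that SRB's non-equivocation and totality constraints never need to be circumvented.
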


\begin{proof}
	Assume by contradiction that there exists a protocol $P$ that uses sequenced reliable broadcast and implements a unidirectional round in a system with $n > 2f$ and $f>1$. Partition the processes into three sets, $Q$, $C_1$, and $C_2$, where $|Q| = n-f$, $|C_1| = 1$, and $|C_2| = f-1$. 
	Consider the following scenarios.
	
	\paragraph{Scenario~1.} The process $p \in C_1$ is faulty, and all the others are \correct. $p$ crashes at the beginning of the execution and never sends any messages, and all messages from $C_2$ to processes in $Q$ are arbitrarily delayed. All other messages are received immediately. Then processes in $Q$ must eventually start the next round since, from the perspective of $Q$, $C_1$ and $C_2$ could both have been faulty, in which case the number of faults is $\leq f$. This satisfies the problem constraint. Similarly, $C_2$ must eventually start the next round, since they are receiving all messages sent by correct processes, which are $> n-f$. 
	
	Observe that in this scenario, a process in $C_2$ starts the next round without receiving a message from $C_1$.
	
	\paragraph{Scenario~2.} The processes in $C_2$ are faulty, and the rest are \correct. No process in $C_2$ sends any messages, and all messages from $C_1$ to $Q$ are arbitrarily delayed. All other messages are received immediately. Using the same reasoning as in Scenario~1, processes in $Q$ and $C_1$ must eventually start the next round.
	
	Observe that in this scenario, a process in $C_1$ starts the next round without receiving a message from any party in $C_2$. 
	
	\paragraph{Scenario~3.} No process is faulty, but all messages out of $C_1$ and $C_2$ to other sets are arbitrarily delayed and all other messages arrive immediately. This scenario is indistinguishable to processes in $Q$ from both of the other scenarios. Furthermore, it is indistinguishable to $C_1$ from Scenario~2, and indistinguishable to $C_2$ from Scenario~1.
	Therefore, any pair of processes $p \in C_1$ and $p' \in C_2$ does not receive each other's message in this round despite both being correct and sending messages. This contradicts the unidirectionality property.
\end{proof}

In Appendix~\ref{sec:corner}, we show that in the corner case where $n\geq 3$ and $f=1$, sequenced reliable broadcast can in fact implement unidirectionality.

\subsection{Unidirectionality Can Implement SRB}
Conversely, we show that unidirectional communication \emph{can} implement sequenced reliable broadcast. An algorithm solving sequenced reliable broadcast using unidirectional communication is presented in Figure~\ref{alg:uni-srb}. This result relies on the construction presented by Aguilera et al.~\cite{aguilera2019impact}, in which they show an algorithm for a broadcast primitive that is equivalent to sequenced reliable broadcast, implemented from SWMR registers. In this paper, we show that the construction can be rewritten to assume only unidirectional rounds instead. 

Intuitively, the algorithm of Aguilera et al.~\cite{aguilera2019impact} relies on the construction of \emph{proofs} that enough processes have received a given value from the sender. Processes read the sender's register, and copy over the value that they see into their own slot, appending their signatures to it. They then scan all the registers until they see at least $t+1$ copies of the same value as their own, and no other value. If they reach this state, then they create an \emph{L1 proof}; they copy over all of the signed copies of the sender's message into their own slot, and append a new signature to it. This process is repeated another time; processes now scan the array until they see enough L1 proofs, all attesting to the same value. At this point, they again copy the L1 proofs into their own slot, constructing an \emph{L2 proof}. Once an L2 proof is constructed, a process may deliver the message.

The crux of the correctness argument for this algorithm relies on the fact that no two correct processes $q,q'$ can construct contradicting L1 proofs: to do so, $q$ and $q'$ would have had to copy over different values from the sender, and since they scan all registers before creating their respective L1 proofs, at least one of them (w.l.o.g., $q'$), must have seen the other's contradictory value. This prevents $q'$ from constructing its own proof. 
Since no two correct processes create contradicting L1 proofs, no two contradicting L2 proofs can be created (whether by correct or Byzantine processes). Therefore, once an L2 proof is generated, anyone seeing it can safely adopt this value.

Adapting this algorithm to unidirectional communication is simple; we replace all `write' operations with `send to all', and replace all `read' operations with receiving a message.
Interestingly, the property of SWMR that prevents correct processes from generating contradicting L1 proofs is captured by unidirectionality; a correct processes $q$ starts a unidirectional round when it forwards the value it received from the sender. By the time this round ends, $q$ must have received the value sent by every correct process that did not receive $q$'s message. In particular, this means that if the sender sent conflicting values to different correct processes, at least one of them will be aware of this conflict and fail to produce an L1 proof.

\renewcommand{\figurename}{Algorithm}
\begin{figure}
	\caption{Sequenced Reliable Broadcast using Unidirectional Rounds}
	\begin{lstlisting}[columns=fullflexible,breaklines=true]
Code @\texttt{for}@ process p
	next_p; //next index to deliver from sender. Initially, next_p=1
	state; //@$\in$@{WaitForSender,WaitForL1Proof,WaitForL2Proof}. Initially, state=WaitForSender
	my_seq; //most recent sequence number used to broadcast a message, if p is the sender
	
	void broadcast (m){
	    my_seq += 1;
		send sign((my_seq,m)) to all; @\label{line:broadcastWrite}@   }
	
	bool maybeDeliver() {
	    k = next_p;
	    val = checkL2Proof(k);@\label{line:readL2proof}@
		if (val != null) {
			deliver(k, val, q);@\label{line:deliver}@
			next_p += 1;
			state = WaitForSender;
			return true;	}
	    return false;
	}
	
	void try_deliver(q) {
		if (state == WaitForSender) {
		    upon receiving message val=(j,m) from q
		        if (maybeDeliver(val)){return;}
			    if (!sValid(q, val) || j!=k) {	return;	}
			    Send sign(val) to all;@\label{line:copyVal}@
			    state = WaitForL1Proof; }
	
		if (state == WaitForL1Proof) {
			checkedVals = @$\emptyset$@;
			do {
			    upon receiving message v from process r @\label{line:readMessage}@
				    if (maybeDeliver(val)){return;}
				    if (validateValue(v,val,k,q)) {checkedVals.add((r,v));} 
			} until (unidirectional round is finshed and size(checkedVals) @$\geq$@ t+1)
	
			l1prf = sign(checkedVals);
			send l1prf to all;@\label{line:writel1prf}@
			state = WaitForL2Proof; }   }
	
		if (state == WaitForL2Proof){
			checkedL1Prfs = @$\emptyset$@;
			do{
			    upon receiving message prf from r @\label{line:readL1proof}@
				    if (maybeDeliver(val)){return;}
				    if (validateL1Prf(prf,val,k,q)) {checkedL1Prfs.add((r,prf));}
			} until (unidirectional round finished and size(checkedL1Prfs) @$\geq$@ t+1)
	
		l2prf = sign(checkedL1Prfs);
		send l2prf to all; }   } }@\label{line:writeL2proof1}@
	\end{lstlisting}
\label{alg:uni-srb}
\end{figure}

\begin{figure}
	\caption{Helper functions for Sequenced Reliable Broadcast algorithm}
\begin{lstlisting}[firstnumber=52]
value checkL2proof(proof,k) {
    if proof contains at least one sequence number and at least one value{
        j = first sequence number in proof;
        if (j@$\neq$@k) {return null;}@\label{line:seqnum}@
		val = first value in proof;
		if (validateL2Prf(proof,val,k,p)) {
			Send proof to all; @\label{line:writeL2proof2}@
			return val; }   }   }
	return null;    }
	
bool validateValue(v, val, k, q){
	if (v == val && sValid(q,v) && key == k){
		return true; }
	return false;	}
	
bool validateL1Prf(proof, val, k, q){
	if (size(proof) @$\geq$@ t+1) {
		for each (i,(v, s)) in proof {
			if (!validateValue(v,val,k,q) || !sValid(i, (v,s)){
				return false;	} 	}	}
	return true;	}

bool validateL1Prf(proof, val, k, q){
	if (size(proof) @$\geq$@ t+1 && @$\forall$@ 
		for each (i,(l1prf,s)) in proof {
			if (!validateL1Proof(l1prf,val,k,q) || !sValid(i, (l1prf, s))){
				return false;}	} 	}
	return true;	}
\end{lstlisting}
\label{alg:uni-srb-help}
\end{figure}

We arrive at the following claim.

\begin{claim}\label{lem:SMnon-eq}
	Sequenced Reliable Broadcast can be solved using unidirectional communication with $n\geq 2t+1$ processes.
\end{claim}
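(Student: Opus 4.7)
The plan is to verify each of the four SRB properties in turn for the algorithm of Figure~\ref{alg:uni-srb}, reusing as much as possible of the correctness argument of Aguilera et al.~\cite{aguilera2019impact} while replacing appeals to SWMR linearizability with appeals to unidirectionality. First I would fix a round index $k$ and argue that all properties can be analyzed per sequence number, since the \textbf{next\_p} counter enforces that processes deliver in order (this handles property~\ref{prop:SRBseq} essentially by construction). Integrity (property~\ref{prop:SRBintegrity}) should follow from the signature checks inside \textsc{validateValue}, \textsc{validateL1Prf}, and \textsc{validateL2Prf}: any delivered value must be signed by the sender, so it must have been broadcast on line~\ref{line:broadcastWrite}.

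The heart of the argument, and the step I expect to be the main obstacle, is establishing the key non-equivocation invariant: \emph{no two correct processes ever send L1 proofs for different values at sequence number $k$}. I would prove this by contradiction, assuming correct processes $q$ and $q'$ send L1 proofs on line~\ref{line:writel1prf} for values $v \neq v'$. Both $q$ and $q'$ must have previously forwarded their respective values on line~\ref{line:copyVal} as part of unidirectional rounds. Applying the unidirectionality guarantee to the pair $(q,q')$, at least one of them---say $q'$---receives the other's forwarded message before finishing its own round. Then $q'$ sees a signed copy of $v$ from $q$ before the \textbf{until} condition allows it to exit, and because $v \neq v'$ either $q'$ cannot collect $t{+}1$ matching values (since only $f$ processes are Byzantine and at most $f$ can send $v'$ when some correct process publicly supports $v$, so using $n \geq 2t+1$ and a straightforward counting argument we get a contradiction), or its validation rejects. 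The delicate part will be setting up the counting correctly and making sure the ``unidirectional round finished'' exit condition is reached; I would isolate this in a lemma stated purely about the two-step ``forward-then-collect'' pattern.

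Once L1 non-equivocation is in hand, no two contradicting L2 proofs can exist even if some processes are Byzantine, because any valid L2 proof on line~\ref{line:writeL2proof1} or line~\ref{line:writeL2proof2} must embed $t{+}1$ L1 proofs, and at least one of those $t{+}1$ must come from a correct process. This yields property~\ref{prop:SRBstrong-termination}: once any correct process delivers on line~\ref{line:deliver} after seeing an L2 proof on line~\ref{line:readL2proof}, it rebroadcasts the proof on line~\ref{line:writeL2proof2}, and by the reliable delivery of the underlying network together with the \textsc{maybeDeliver} hook invoked inside every \textbf{upon receiving}, every correct process eventually sees the proof and delivers. For property~\ref{prop:SRBweak-termination}, I would argue that when the sender is correct, every correct process eventually copies its value, so every correct process eventually collects $t{+}1$ matching signed copies (using $n - f \geq t+1$), producing L1 proofs; then every correct process eventually sees $t{+}1$ L1 proofs and produces an L2 proof, at which point delivery occurs.

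Finally I would tie the pieces together by induction on $k$: assuming sequence numbers $1,\dots,k-1$ have been delivered by all correct processes (which keeps \textbf{next\_p} advancing in lockstep eventually), the above argument applies to sequence $k$. The only subtle bookkeeping is that the \textbf{state} variable needs to be reset consistently across processes and that the sequence check on line~\ref{line:seqnum} prevents stale L2 proofs from triggering out-of-order delivery; I expect these to be routine once the main L1 non-equivocation lemma is proved.
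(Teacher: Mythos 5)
Your overall decomposition (per-property lemmas, ordering by the \textbf{next\_p} counter, integrity from signatures, weak termination by everyone accumulating $t{+}1$ matching copies, agreement via rebroadcast of the L2 proof plus an L1 non-equivocation invariant) mirrors the paper's proof. The gap is in how you close the key invariant. Your counting argument --- ``at most $f$ processes can send $v'$ when some correct process publicly supports $v$'' --- is false in exactly the situation the invariant is meant to handle: a Byzantine \emph{sender} that equivocates. If the sender signs and sends $v$ to one subset of correct processes and $v'$ to another, then every correct process in the second subset honestly forwards a validly signed copy of $v'$ on line~\ref{line:copyVal}, and together with the remaining Byzantine processes these can easily supply $q'$ with $t{+}1$ matching, validly signed copies of $v'$, even after $q'$ has received $q$'s copy of $v$. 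So ``cannot collect $t{+}1$ matching values'' does not follow from $n \geq 2t+1$, and the fallback ``its validation rejects'' does not save you either: \textsc{validateValue} merely filters out non-matching values from \textbf{checkedVals}; it does not abort the construction of the L1 proof.

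What actually makes the invariant go through --- and what the paper (following Aguilera et al.) relies on --- is a protocol rule, not a counting bound: a correct process creates an L1 proof only if, by the end of its unidirectional round started on line~\ref{line:copyVal}, it has seen $t{+}1$ copies of its own value \emph{and no conflicting sender-signed value}. Unidirectionality guarantees that of two correct processes $q,q'$ forwarding $v \neq v'$, at least one receives the other's conflicting sender-signed copy before its round ends, and that process then refuses to sign an L1 proof on line~\ref{line:writel1prf}. Your proof needs this ``abort on conflict'' condition stated explicitly (the paper's pseudocode is itself loose here, but its intuition paragraph and its Case~1 argument in the agreement lemma make clear this is the intended mechanism); with it, the contradiction is immediate and no counting over who sent $v'$ is needed. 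The rest of your plan (one correct L1 proof inside every $t{+}1$-sized L2 proof, rebroadcast of the L2 proof for agreement, induction over sequence numbers) coincides with the paper's argument and is fine once this step is repaired.
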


We prove the lemma by showing that Algorithm~\ref{alg:uni-srb} correctly implements sequenced reliable broadcast. 
We do so by showing that the algorithm satisfies each of the required properties.

\begin{lemma}\label{lem:deliverseq}
If some \correct{} process $q$ delivers a message with sequence number $k$ from $p$ at time $t$, then $q$ delivered messages with all sequence numbers $1\leq k' < k$ before time $t$.
\end{lemma}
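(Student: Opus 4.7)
The plan is to prove this by direct inspection of the algorithm, observing that the local counter \texttt{next\_p} enforces in-order delivery at every correct process. I would begin by noting that in Algorithm~\ref{alg:uni-srb} the only line at which a correct process $q$ executes \texttt{deliver} is inside \texttt{maybeDeliver}, and there the sequence number passed to \texttt{deliver} is exactly the current value of $q$'s local variable \texttt{next\_p} (since the helper sets $k = \texttt{next\_p}$ before calling \texttt{checkL2Proof}). I would then point to the check in \texttt{checkL2proof} that returns \texttt{null} whenever the sequence number $j$ extracted from the candidate L2 proof does not equal $k$ (the line labeled \texttt{seqnum}). Hence a correct $q$ delivers a message tagged with sequence number $k$ only when $\texttt{next\_p} = k$ at that moment.

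Next I would track how \texttt{next\_p} evolves at $q$. Initially $\texttt{next\_p} = 1$, and \texttt{next\_p} is modified at exactly one place in the code, namely the line \texttt{next\_p += 1} immediately following the \texttt{deliver} call in \texttt{maybeDeliver}. Therefore \texttt{next\_p} is monotone non-decreasing, takes only positive integer values, and increases by exactly one each time $q$ performs a delivery.

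From these two observations the lemma follows by a short induction on $k$. The base case $k = 1$ is vacuous, since there are no sequence numbers $k' < 1$. For the inductive step, suppose $q$ delivers at time $t$ the message with sequence number $k \geq 2$. By the first observation, $\texttt{next\_p} = k$ at time $t$. By the second observation, \texttt{next\_p} reaches the value $k$ only after exactly $k-1$ prior delivery events by $q$, and at those events \texttt{next\_p} held the values $1, 2, \ldots, k-1$ in that order. Applying the first observation to each of those prior events, the $j$-th delivery by $q$ must have been of a message tagged with sequence number $j$, for $j = 1, \ldots, k-1$. All of these events occurred strictly before time $t$, completing the proof.

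I do not expect a real obstacle: the statement is essentially a safety property of $q$'s internal state machine and needs no reasoning about the unidirectional rounds, the L1/L2 proof mechanism, or the sender $p$'s behavior. The only care needed is to make sure one accounts for every site at which delivery can happen and every site at which \texttt{next\_p} is touched, which is easy from the pseudocode.
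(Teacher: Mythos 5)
Your proof is correct and follows essentially the same route as the paper's: both arguments hinge on the fact that delivery only occurs when \texttt{checkL2Proof} is called with the current value of \texttt{next\_p}, that the sequence-number check (Line~\ref{line:seqnum}) rejects mismatches, and that \texttt{next\_p} starts at $1$ and is incremented only upon delivery, with an induction finishing the argument. You simply spell out the induction that the paper leaves as "straightforward."
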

\begin{proof}
Processes deliver a message $(k,m)$ on Line~\ref{line:deliver}, which is only executed if checkL2Proof(p,k) returns a non-null value, which a correct process calls with $k$ equal to the next sequence number to be delivered from the sender. This sequence number starts at $1$ and is only ever incremented upon delivery of a value from $p$. Note that by Line~\ref{line:seqnum} in checkL2Proof(), checkL2Proof returns null if the sequence number does not match. The rest of the proof can be completed by a straightforward induction.
\end{proof}
	
\begin{lemma}
If the sender $p$ is correct, then every correct process eventually delivers every message that $p$ broadcasts.
\end{lemma}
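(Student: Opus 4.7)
The plan is to trace the propagation of $p$'s broadcast through the three stages of the algorithm (raw value, L1 proof, L2 proof) and conclude that every correct process reaches the \texttt{deliver} call on Line~\ref{line:deliver} for $(k,m)$. I would prove it by successive lemmas, one for each stage, each saying ``every correct process eventually receives at least $t+1$ valid objects of this stage for $(k,m)$.''

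First I would argue that every correct process eventually receives $p$'s signed $(k,m)$. Here the key observation is that processes keep sending in rounds: a correct $p$ persistently sends $\mathrm{sign}((k,m))$ and every correct $q$ persistently sends its current state. By unidirectionality applied round by round, either $q$ receives $p$'s message directly, or $p$ learns $q$ has not received and $p$ can continue sending; meanwhile any correct $r$ that does receive will itself forward $\mathrm{sign}((k,m))$ at Line~\ref{line:copyVal}, so $q$ has more and more senders trying to reach it. Combined with the fact that $p$ is correct (so no contradicting value is ever signed by $p$), this yields that every correct $q$ eventually holds $\mathrm{sign}((k,m))$ as its \texttt{val} and transitions to \textsc{WaitForL1Proof}.

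Next I would show the L1 stage completes. Once a correct process holds \texttt{val}, it signs and broadcasts $\mathrm{sign}((k,m))$. Since there are at least $n - t \geq t+1$ correct processes, all eventually doing this with the same value (because $p$ is correct and \texttt{validateValue} enforces equality with \texttt{val}), the same persistence-plus-unidirectionality argument from the previous step gives that every correct process eventually accumulates $t+1$ valid signatures in \texttt{checkedVals}, constructs an L1 proof, and broadcasts it at Line~\ref{line:writel1prf}. Repeating the same argument one more level up gives $t+1$ valid L1 proofs at every correct process, hence an L2 proof constructed and broadcast at Line~\ref{line:writeL2proof1}.

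Finally, every correct process periodically calls \texttt{maybeDeliver}, which invokes \texttt{checkL2Proof} at Line~\ref{line:readL2proof}; once a valid L2 proof for $(k,m)$ is received (either self-constructed or echoed at Line~\ref{line:writeL2proof2}), the process delivers on Line~\ref{line:deliver}. Combined with Lemma~\ref{lem:deliverseq}, which forces messages to be delivered in sequence order, this gives that for each sequence number $k$ broadcast by $p$, the process also delivers all prior broadcasts by $p$. The main obstacle is the very first stage: unidirectionality on its own only guarantees \emph{one} direction per round, not eventual delivery, so the argument must carefully combine persistent resending in every round with the growing set of correct echoers to conclude that $p$'s value reaches every correct process. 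Once that stage is in place, the L1 and L2 stages are essentially identical repetitions of the same argument.
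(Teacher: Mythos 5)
Your overall three-stage skeleton (raw value $\to$ L1 proof $\to$ L2 proof $\to$ deliver) is the same propagation argument the paper uses (the paper merely packages it as a contradiction with two cases, depending on whether any L2 proof for sequence number $k$ is ever sent). However, your treatment of the first stage has a genuine gap. Unidirectionality is a per-pair, per-round disjunction: in each round, for each pair $(r,q)$, the property is already satisfied if $r$ receives $q$'s message; nothing ever forces the other direction. So an adversary can deliver $q$'s messages to everyone while delivering nothing to $q$, in every round, forever. Persistent resending and a growing set of correct echoers do not help, because every pair's obligation keeps being discharged in the direction away from $q$. Moreover, the definition gives the sender no feedback --- ``$p$ learns $q$ has not received'' is not part of the property (that phrasing appears only in a discarded variant of the definition) --- so the round-by-round bookkeeping you describe has nothing to hang on. Eventual receipt of $(k,m)$ by every correct process simply cannot be derived from unidirectionality, no matter how the echoing is arranged.

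The paper does not attempt this. It works in the standard asynchronous message-passing model, where a message sent by a correct process to a correct process is eventually received; that channel-level liveness is what yields ``every correct process eventually receives $p$'s signed $(k,m)$,'' then $t+1$ matching signed copies (there are $n-t \geq t+1$ correct processes and, by unforgeability of $p$'s signature, no conflicting value ever validates), then $t+1$ valid L1 proofs, then a valid L2 proof and delivery. Unidirectionality is reserved for where it is genuinely needed, namely the agreement lemma: of two correct processes compiling L1 proofs for the same sequence number, at least one must have seen the other's value before its round ended --- a within-round safety guarantee, not a delivery guarantee. If you replace your first-stage argument by an appeal to the reliable-channel assumption (and, as the paper does via the minimal-$k$ choice, note that every correct process eventually advances to sequence number $k$ for sender $p$, so the L2 check is eventually performed with the right index), the rest of your proof coincides with the paper's and goes through.
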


\begin{proof}
Let $p$ be a correct sender. We assume by contradiction that there exists some message $(k,m)$ the $p$ broadcasts, but that some correct process $q$ never delivers. Furthermore, assume without loss of generality that $k$ is the smallest sequence number for which $p$ broadcasts a message that some correct process never delivers. That is, all correct processes must eventually deliver all messages $(k',m')$ from $p$, for $k' < k$. Thus, all correct processes must eventually increment $last[p]$ to $k$.

We consider two cases, depending on whether or not some process eventually sends an L2 proof for some $(k,m')$ message from $p$.

First consider the case where no process ever sends an L2 proof of any value $(k,m')$ from $p$.
Since $p$ is correct, upon broadcasting $(k,m)$, $p$ must send a signed copy of $(k,m)$ (line~\ref{line:broadcastWrite}). Since $p$ is correct, it sends $(k,m)$ to all processes, and never sends any other message with that sequence number to any process. So, every correct process will eventually receive  $(k,m)$ in line~\ref{line:readMessage}, sign and send it to all others, and change their state to WaitForL1Proof. 

Furthermore, since $p$ is correct and we assume signatures are unforgeable, no process $q$ can send any other valid value $(k',m')\neq(k,m)$ to any correct process $r$ and have that value validated by $r$. Thus, eventually each correct process will add at least $t+1$ copies of $(k,m)$ to its checkedVals, sign and send an L1proof consisting of these values, and change their state to WaitForL2Proof. 

Therefore, all correct processes will eventually receive at least $t+1$ valid L1 Proofs for $(k,m)$ in line~\ref{line:readL1prf} and construct and send valid L2 proofs for $(k,m)$. This contradicts the assumption that no L2 proof is ever sent. 

In the case where there is some L2 proof, by the argument above, the only value it can prove is $(k,m)$. 
Therefore, all correct processes will receive at least one valid L2 proof and deliver. This contradicts our assumption that $q$ is correct but does not deliver $(k,m)$ from $p$.
\end{proof}
	
\begin{lemma}
If some correct process $q$ delivers message $m$ with sequence number $k$ from $p$, then eventually every correct process delivers $m$ with sequence number $k$ from $p$.
\end{lemma}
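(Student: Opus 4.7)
My plan is to prove this agreement property by leveraging the L2 proof forwarding step built into \textsc{checkL2Proof}. Suppose a correct process $q$ delivers $(k,m)$ from $p$ at some time. By inspection of the code, $q$ can only reach Line~\ref{line:deliver} after \textsc{checkL2Proof} returns a non-null value, which means $q$ held a valid L2 proof $\pi$ attesting to $(k,m)$ from $p$. Crucially, before \textsc{checkL2Proof} returns, $q$ forwards $\pi$ to all other processes on Line~\ref{line:writeL2proof2}. So the first goal is to argue that every correct process $r$ eventually receives a valid L2 proof for $(k,m)$ from $p$; then I will argue that when $r$ processes it with $\texttt{next\_p}=k$, $r$ must deliver $(k,m)$.

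To formalize the second half, I would invoke Lemma~\ref{lem:deliverseq} together with the previous (liveness-style) lemma and run an induction on the sequence number. For the base case and inductive step, the hypothesis is that every correct process eventually delivers all $(k',m')$ with $k'<k$ (for $k'$ below the minimum violating index, if one existed) and thus eventually advances its local $\texttt{next\_p}$ to $k$. Once $r$ is in the state with $\texttt{next\_p}=k$ and has received $\pi$, its \textsc{maybeDeliver}/\textsc{checkL2Proof} call on Line~\ref{line:readL2proof} will validate $\pi$ (the proof is syntactically correct and for the right sequence number by Line~\ref{line:seqnum}) and $r$ will deliver $m$ at sequence $k$.

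The one remaining worry is that $r$ could in principle deliver some different $(k,m')$ with $m'\neq m$ before $\pi$ arrives, which would violate agreement even though $r$ eventually delivers something at index $k$. I would rule this out with a non-equivocation argument for L2 proofs: any valid L2 proof requires at least $t+1$ validated L1 proofs, so it contains a signed L1 proof from at least one correct process. By the L1-uniqueness argument sketched in the paragraphs preceding Algorithm~\ref{alg:uni-srb} (two correct processes cannot produce L1 proofs for different values, because unidirectionality forces at least one of them to observe the conflicting forwarded value from the sender before producing its proof), all valid L1 proofs for sequence $k$ from $p$ attest to the same value, and hence no valid L2 proof for any $(k,m')\neq(k,m)$ can exist. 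Combining this uniqueness with the induction above, $r$ can only deliver $(k,m)$ at sequence $k$, which completes the argument.

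The main obstacle I anticipate is the ``eventually receives $\pi$'' step: unidirectional rounds only guarantee that in each round at least one of two communicating correct processes receives the other's message, not that both do. I would handle this by relying on the standing convention in the algorithm that processes continue to re-send their most recent L2 proof in subsequent rounds (implicit in the persistent \textsc{state} variable and the fact that after delivery $q$ returns to \textsc{WaitForSender} with $\pi$ still available to echo when asked), together with the fact that an eventually-delivered-messages assumption underlies the adaptation from SWMR registers to unidirectional rounds. Spelling this out cleanly -- either as a fairness assumption on the unidirectional round implementation or by an explicit ``echo on request'' augmentation of the pseudocode -- is where I expect to spend the most care.
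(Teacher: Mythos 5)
Your proposal is correct and follows essentially the same route as the paper's proof: the paper likewise argues (i) that no valid L2 proof for a conflicting $(k,m')$ can exist, because every valid L2 proof contains an L1 proof from a correct process and unidirectionality prevents two correct processes from compiling L1 proofs for different values at sequence $k$, and (ii) that $q$'s sent L2 proof eventually reaches every other correct process, whose next-to-deliver index reaches $k$ by Lemma~\ref{lem:deliverseq} together with a minimal-counterexample (induction) argument, forcing delivery of $(k,m)$. The ``eventually receives the proof'' issue you flag is handled in the paper exactly as you anticipate—by implicitly assuming reliable eventual delivery of sent messages in the underlying asynchronous system—so your extra care there only makes explicit what the paper leaves implicit.
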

\begin{proof}
Let $q$ be a correct process that delivers $(k,m)$, and let $q'$ be another correct process. Assume by contradiction that $q'$ never delivers $(k,m)$ from $p$. 

We consider two cases.

\textsc{Case 1.} $q'$ eventually delivers some other message $(k,m')$ from $p$, where $m\neq m'$.

Since $q$ and $q'$ are correct, they must have received valid L2 proofs at line~\ref{line:readL2proof} before delivering $(k,m)$ and $(k,m')$ respectively. Let $\mathcal{Q}$ and $\mathcal{Q'}$ be those valid proofs for $(k,m)$ and $(k,m')$ respectively. $\mathcal{Q}$ (resp. $\mathcal{Q'}$) consists of at least $t+1$ valid L1 proofs; therefore, at least one of those proofs was created by some correct process $r$ (resp. $r'$). Since $r$ (resp. $r'$) is correct, it must have sent $(k,m)$ (resp. $(k,m')$) to all processes in line~\ref{line:copyVal}. Recall that both $r$ and $r'$ communicate through unidirectional rounds, and wait until their round ends before compiling and sending their L1Proofs. Since both $r$ and $r'$ are correct, at least one of them must have received the other's message before compiling its L1 proof.
Assume without loss of generality that $r'$ received $r$'s message. Since $r'$ is correct, it cannot have then compiled an L1 proof for $(k,m')$. We have reached a contradiction.

\textsc{Case 2.} $q'$ never delivers any message from $p$ with sequence number $k$.

By Lemma~\ref{lem:deliverseq}, if $q$ delivers $(k,m)$ from $p$, then for all $i<k$ there exists $m_i$ such that $q$ delivered $(i,m_i)$ from $p$ before delivering $(k,m)$.

Assume without loss of generality that $k$ is the smallest key for which $q'$ does not deliver any message from $p$. Thus, $q'$ must have delivered $(i,m_i')$ from $p$ for all $i<k$; thus, $q'$ must have incremented $last[p]$ to $k$. Since $q'$ never delivers any message from $p$ for sequence number $k$, $q'$'s $last[p]$ will never increase past $k$.

Since $q$ delivers $(k,m)$ from $p$, then $q$  must have sent a valid L2 proof $\mathcal{P}$ of $(k,m)$ in line~\ref{line:writeL2proof1} or~\ref{line:writeL2proof2}. Thus, $q'$ will eventually receive this message. Since $q'$'s $last[p]$ eventually reaches $k$ and never increases past $k$, $q'$ will eventually call checkL2Proof with sequence number $k$, and checkL2Proof will eventually return a non-null value, causing $q'$ to deliver a value for sequence number $k$. We have reached a contradiction.
\end{proof}

\begin{lemma}
If some \correct{} process delivers a message $m$ from $p$, then $p$ broadcast $m$ at some earlier point in time.
\end{lemma}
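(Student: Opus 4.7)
The plan is to chase the chain of validation checks backward from delivery to the sender's original broadcast, using unforgeability of signatures at the base case. First I would observe that a correct process delivers a message only on Line~\ref{line:deliver}, which is reached only when \texttt{checkL2Proof} returns a non-null value. Inspecting the helper, this requires \texttt{validateL2Prf} to succeed on some proof object, which in turn requires every entry of that proof to be a valid L1 proof (as certified by \texttt{validateL1Prf}).

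Next I would peel off the L1 layer: \texttt{validateL1Prf} requires every one of its at least $t+1$ entries to pass \texttt{validateValue} with respect to the claimed value $(k,m)$ and sender $p$. Inspecting \texttt{validateValue}, this requires \texttt{sValid}$(p,v)$ to hold on the underlying signed value $v=(k,m)$; i.e., the value must carry a signature that verifies under $p$'s public key. So at the bottom of any chain of proofs that could induce a correct process to deliver $(k,m)$ from $p$, there must exist a bitstring $(k,m)$ signed by $p$'s secret key.

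Finally, I would invoke the unforgeable signature assumption stated in Section~\ref{sec:prelims}: only $p$ itself can produce a valid signature under its own key. Therefore $p$ must at some earlier point have executed the signing step in its \texttt{broadcast} procedure on Line~\ref{line:broadcastWrite} with the value $(k,m)$, which by definition means $p$ broadcast $m$ with sequence number $k$ prior to the delivery event. This establishes integrity.

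The main (minor) obstacle is being careful about where the signature is actually checked: the only place in the algorithm that binds an arbitrary value to the sender $p$ is \texttt{sValid} inside \texttt{validateValue}, so I want to make sure the statement of \texttt{validateL1Prf} and \texttt{validateL2Prf} does transitively force every L2 proof to contain a value signed by $p$ rather than merely signed by some $t+1$ intermediate processes. Once that is spelled out, the rest is just a straightforward appeal to unforgeability; no new structural argument about unidirectional rounds is needed for this property.
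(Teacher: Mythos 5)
Your proposal is correct and follows essentially the same route as the paper's proof: trace the delivered value back through the L2- and L1-proof validation checks to at least $t+1$ copies of $(k,m)$ bearing $p$'s signature, then invoke unforgeability of signatures to conclude that $p$ broadcast $(k,m)$ earlier. The only nuance, which the paper handles by explicitly stating the claim for a \emph{correct} sender $p$, is that unforgeability only lets you conclude that $p$ executed its broadcast routine when $p$ is correct; a Byzantine $p$ could sign $(k,m)$ outside the protocol, so your final step should carry that qualifier.
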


\begin{proof}
We show that if a correct process $q$ delivers $(k,m)$ from a correct sender $p$, then $p$ broadcast $(k,m)$.
Correct processes only deliver values for which a valid L2 proof exists (lines~\ref{line:readL2proof}---\ref{line:deliver}). Therefore, $q$ must have received a valid L2 proof $\mathcal{Q}$ for $(k,m)$. $\mathcal{Q}$ consists of at least $t+1$ L1 proofs for $(k,m)$ and each L1 proof consists of at least $t+1$ matching copies of $(k,m)$, signed by $p$. Since $p$ is correct and we assume signatures are unforgeable, $p$ must have broadcast $(k,m)$ (otherwise $p$ would not have attached its signature to $(k,m)$). 
\end{proof}

\bibliographystyle{plain}
\bibliography{main}

\appendix
\section{SRB Can Implement Unidirectionality When $n\geq 3$ and $f=1$}\label{sec:corner}

For the corner case, we show that sequenced reliable broadcast can implement unidirectionality when $n\geq 3$ and $f=1$. In fact, only reliable broadcast is needed for this result to hold. 

\begin{claim}
	Reliable broadcast can implement unidirectionality in a system with $f=1$ and $n\geq 3$.
\end{claim}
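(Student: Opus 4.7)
The plan is to implement unidirectional rounds using reliable broadcast (RB) via a very simple protocol: in round $r$, each process $p$ reliably broadcasts its round-$r$ message tagged with the round number, and then waits until it has delivered a round-$r$ message from \emph{every} other process before starting round $r+1$. The reason this works when $f = 1$, despite the impossibility just proved for $f > 1$, is that the tripartition argument from the previous proof collapses: the ``hidden'' set $C_2$ has size $f - 1 = 0$ when $f = 1$, so the adversary cannot stage the two symmetric scenarios that produced the contradiction.

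To verify unidirectionality, consider two correct processes $p$ and $q$ that both send in round $r$. By the completion rule of the protocol, if $p$ ever advances to round $r+1$ then $p$ has delivered $(r, m_q)$ from $q$, and by symmetry the same conclusion holds with the roles of $p$ and $q$ swapped. The disjunction required by the definition is therefore satisfied --- in fact we obtain the stronger bidirectionality guarantee. The ``every other process'' wait remains safe even when the Byzantine process chooses to participate, since by RB's agreement property any message it does broadcast is delivered identically to all correct processes, so no equivocation can cause two correct processes to wait for incompatible completions.

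The main obstacle will be the liveness question: if the (unique) Byzantine process refuses to broadcast a round-$r$ message at all, every correct process stalls forever in round $r$. I would address this by observing that unidirectionality, as phrased in the paper, is a safety property conditional on the next round beginning, and is therefore vacuously satisfied whenever no correct process advances. Since the claim concerns implementability of the property rather than liveness under a worst-case Byzantine attack, this suffices; no execution of the protocol ever violates unidirectionality, whether or not rounds terminate. If one wanted to recover more progress, the protocol could be layered with an additional round-$r$ ``complete'' message (relying on FIFO delivery per sender so that delivery of a correct sender's complete certifies prior delivery of its round-$r$ payload), but the unidirectionality argument would go through by the same case analysis.
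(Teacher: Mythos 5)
There is a genuine gap here, and it is exactly the issue you try to argue away in your last paragraph. Your protocol has each process wait for a round-$r$ message from \emph{every} other process, and you concede that if the single Byzantine process stays silent, every correct process blocks forever; you then claim this is acceptable because unidirectionality is ``vacuously satisfied whenever no correct process advances.'' But in this paper, ``implementing rounds'' carries an implicit liveness obligation: the impossibility proof for $f>1$ explicitly relies on the fact that correct processes \emph{must eventually start the next round} when the visible behaviour is consistent with at most $f$ faults (that is how Scenarios~1 and~2 are driven to a contradiction). If a round implementation were allowed to stall forever, that impossibility argument would collapse and, worse, \emph{any} system could ``implement'' unidirectional rounds by never letting anyone advance --- the claim would be trivial. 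So a correct implementation must let correct processes finish a round while hearing from only $n-1$ processes (i.e., possibly missing one), and your protocol, which insists on hearing from all $n$, is not such an implementation. Your sketched repair (a FIFO-ordered ``complete'' message) does not address this either, since it still requires the Byzantine process to send something.

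The paper's proof handles precisely the case your protocol avoids. Each process waits for phase-1 messages from only $n-1$ distinct processes, so it may miss one other process and still terminate even if the Byzantine process is silent; a second phase then has everyone forward (signed) what they received, again waiting for $n-1$ phase-2 messages. If correct $p$ and $p'$ each miss the other directly, each of them still hears phase-2 messages from all of the remaining $n-2$ processes, and each of those processes missed at most one sender in phase 1, hence relays a signed phase-1 value of $p$ or of $p'$; unforgeability of signatures then guarantees that at least one of $p,p'$ receives the other's value indirectly, which is exactly the unidirectional (not bidirectional) guarantee. The essential ideas your proposal is missing are the ``wait for all but one'' threshold (to preserve liveness with a silent Byzantine process) and the relay-with-signatures phase (to recover the missed value indirectly); the observation that $|C_2|=f-1=0$ explains why $f=1$ is special but does not by itself make your construction work.
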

\begin{proof}
	Consider the following protocol for creating a unidirectional round.
	
	\begin{lstlisting}[keywords={}]
	Protocol for process p with input v:
	Phase 1: send (v,@$\sigma_p$@) to all //where @$\sigma_p$@ is @$p$@'s unforgeable signature for @$v$@
	    wait to receive phase 1 messages with valid signatures from n-1 distinct processes
	Phase 2: forward all messages received to all
	    wait to receive phase 2 messages from n-1 distinct processes, 
	    such that each message is of the from [(@$v_1$@, @$\sigma_1$@), ... (@$v_m$@, @$\sigma_m$@)]  
	    where m @$\geq$@ 2 and all signatures are valid and from distinct processes.
	\end{lstlisting}
	
	We claim that at the end of phase~2 of a \correct{} process $p$, the unidirectional property holds for $p$ with every other \correct{} process $p'$ for the input values.
	
	
	Consider two correct processes $p$ and $p'$ executing the above protocol in a system with reliable broadcast, and let $Q$ be  the set containing the rest of the processes in the system. If $p$ receives $p'$'s message (or vice versa) directly in either phase, then the unidirectional property already holds. 
	
	Thus, assume that neither $p$ nor $p'$ receives the other's message directly in either phase.
	Note that all processes in $Q$ must receive at least one of $p$ or $p'$'s messages in phase~1, without loss of generality assume they receive $p$'s value. Furthermore, both $p$ and $p'$ receive all of $Q$'s phase~2 messages. Since valid phase~2 messages must contain $n-1$ values from phase~1 and are unforgeable, $Q$'s phase~2  message must contain $p$'s phase~1 message, which $p'$ now receives.
	%
\end{proof}
\end{document}